\newtheoremstyle{mystyle}%
  {\topsep}%
  {\topsep}%
  {\normalfont}%
  {}%
  {\bfseries}%
  {}%
  {.5em}%
  {}%
\theoremstyle{plain}
\newtheorem{thm}{Theorem}
\newtheorem{lem}{Lemma}
\newtheorem{assm}{Assumption}
\newtheorem{defn}{Definition}
\newcommand\copyrighttext{%
  \footnotesize \textcopyright \the\year{} IEEE. Personal use of this material is permitted. Permission from IEEE must be obtained for all other uses, including reprinting/republishing this material for advertising or promotional purposes, collecting new collected works for resale or redistribution to servers or lists, or reuse of any copyrighted component of this work in other works.}
\newcommand\copyrightnotice{%
\begin{tikzpicture}[remember picture,overlay]
\node[anchor=south,yshift=10pt] at (current page.south) {\fbox{\parbox{\dimexpr0.75\textwidth-\fboxsep-\fboxrule\relax}{\copyrighttext}}};
\end{tikzpicture}%
}
\title{\LARGE \bf
Safe and Stable Closed-Loop Learning for\\Neural-Network-Supported Model Predictive Control
}
\author{Sebastian Hirt$^{1}$, Maik Pfefferkorn$^{1}$, and Rolf Findeisen$^{1}$%
\thanks{$^{1}$S. Hirt, M. Pfefferkorn and R. Findeisen are with the Control and Cyber-Physical Systems Laboratory,
        Technical University of Darmstadt, Germany
        {\tt\small \{sebastian.hirt, maik.pfefferkorn, rolf.findeisen\}@iat.tu-darmstadt.de}}%
}
\begin{document}
\maketitle
\copyrightnotice
\thispagestyle{empty}
\pagestyle{empty}

\begin{abstract}
Safe learning of control policies remains challenging, both in optimal control and reinforcement learning.
In this article, we consider safe learning of parametrized predictive controllers that operate with incomplete information about the underlying process.
To this end, we employ Bayesian optimization for learning the best parameters from closed-loop data.
Our method focuses on the system's overall long-term performance in closed-loop while keeping it safe and stable.
Specifically, we parametrize the stage cost function of an MPC using a feedforward neural network.
This allows for a high degree of flexibility, enabling the system to achieve a better closed-loop performance with respect to a superordinate measure.
However, this flexibility also necessitates safety measures, especially with respect to closed-loop stability.
To this end, we explicitly incorporated stability information in the Bayesian-optimization-based learning procedure, thereby achieving rigorous probabilistic safety guarantees.
The proposed approach is illustrated using a numeric example.
\end{abstract}

\section{Introduction}
Model predictive control (MPC) allows for optimal control of complex, nonlinear systems while considering constraints on the system's states and the control inputs \cite{findeisen2002introduction}. However, its reliance on accurate models and suitably chosen cost functions presents challenges. Traditionally, MPC employs iterative prediction and optimization based on a system model.
Nevertheless, obtaining accurate models can be difficult due to system complexity or limited knowledge.
Furthermore, choosing a suitable open-loop cost function and constraints to achieve the desired closed-loop performance while guaranteeing safety remains challenging.
Machine learning has emerged as a promising approach to address this issue by enabling data-driven model learning for MPC, e.g., \cite{hewing2020cautious,Pfefferkorn2022,maiworm2021online,nguyen2021robust, zieger2020towards}. Employing machine learning in control, however, introduces concerns regarding performance, stability and repeated feasibility \cite{Mesbah2022}. These challenges can be tackled by robust MPC formulations, which are, however, computationally expensive and lead to overly conservative control actions, while relying on uncertainty bounds that are challenging to obtain for machine-learning-based models. Additionally, a highly accurate prediction model does not always guarantee optimal closed-loop performance with respect to a superordinate performance measure, see \cite{kordabad2023reinforcement,gevers1993towards} and references therein.
Recent works explore hierarchical frameworks that address these limitations by combining optimization-based controllers with optimization and learning algorithms.
These frameworks often involve two layers: a high-level layer for long-term and global performance optimization, and a lower level predictive controller, typically with prediction horizon significantly shorter than task duration, responsible for short-term planning and control.
The former often exploits techniques from reinforcement learning, e.g., \cite{kordabad2023reinforcement,zanon2021safe}, or Bayesian optimization, e.g., \cite{paulson2023tutorial,piga2019performance, hirt2024learning}, and is in conjunction with the latter.
The hierarchical structure offers a promising approach to achieve desired closed-loop behavior while mitigating performance limitations arising from lack of knowledge about the process.

We consider MPC with short prediction horizon and relying on prediction models with significant model-plant mismatch due to lacking process insights.
We aim to achieve long-term performance optimization of the underlying nonlinear dynamical system through a multi-episode Bayesian-optimization-based learning approach. Specifically, Bayesian optimization \cite{garnett2023bayesian}, a sample-efficient optimization method, is employed to learn parameters of the MPC with the objective of optimizing a desired long-term performance metric.
To this end, a key aspect of our approach involves parametrizing the MPC stage cost function using an artificial neural network. This leverages the powerful representation capabilities of neural networks to handle a large class of functions. Essentially, parametrizing the cost function offers an alternative approach to model learning, motivated by the inherent connection between model-based predictions and the MPC cost function. 

Similar ideas have been explored in \cite{seel2022convex}, where convex neural networks are exploited for parametrizing the MPC and tuned via reinforcement learning.
However, such convex parametrizations restrict the full potential of using neural networks and it is challenging to establish robust safety guarantees in closed-loop both during learning and for the final result.
In \cite{hirt2023stability}, including stability information directly in the Bayesian optimization layer has been considered.
While the proposed approach is capable of providing Lyapunov-based stability certificates in closed-loop for the tuned controller, instability and constraint violations can occur during learning for a significant fraction of probed trajectories.
This restricts the application of the approach proposed in \cite{hirt2023stability} to non-safety-critical real-world-processes if the learning includes real-world data from experiments. 
We address these limitations by proposing a novel, stability-informed Bayesian optimization approach for controller tuning.
In particular, we exploit safe Bayesian optimization algorithms, such as used in \cite{berkenkamp2016safeb,wei2023highdimensional,krishnamoorthy2022safe}, to obtain probabilistic stability guarantees already during learning for the probed trajectories.

The remainder of this article is structured as follows.
We provide fundamentals on model predictive control, Gaussian process regression and safe Bayesian optimization in Section \ref{sec:fundamentals}.
Subsequently, the proposed safe learning approach is introduced in Section \ref{sec:main}, including the neural-network-based parameterization of the MPC scheme and the incorporation of stability constraints into the Bayesian optimization procedure.
In Section \ref{sec:simulation}, we underline the effectiveness and safety of the proposed approach in a simulation study before concluding in Section \ref{sec:conclusion}.

\section{Fundamentals}
\label{sec:fundamentals}
This section presents an overview of the control task and establishes the necessary theoretical foundations.
We outline the control objective, introduce parametrized model predictive control, and recapitulate on stability of dynamical systems.
Subsequently, we introduce Gaussian process regression for surrogate modeling before presenting an algorithm for safe Bayesian optimization.

\subsection{Problem Formulation}
We consider a nonlinear, discrete-time dynamical system
\begin{equation} 
\label{eqn:discrete_system_general}
    x_{k+1} = f(x_k, u_k),
\end{equation}
where $x_k \in \mathbb{R}^{n_\text{x}}$ are the system states, $u_k \in \mathbb{R}^{n_\text{u}}$ are the system inputs, $f: \mathbb{R}^{n_\text{x}} \times \mathbb{R}^{n_\text{u}} \to \mathbb{R}^{n_\text{x}}$ represents the (nonlinear) dynamics, and $k \in \mathbb{N}_0$ is the discrete time index.
Our control objective is to optimally steer system \eqref{eqn:discrete_system_general} from an initial state $x_0$ to a desired set-point $(x_d, u_d)$ while satisfying input and state constraints, and while maintaining stability.
To do so, we use model predictive control but consider the case of a significant model-plant-mismatch between the controller-intern prediction model and the real-world plant.
To address the model-plant mismatch, we parametrize the model predictive control formulation.
This approach incorporates additional degrees of freedom, enabling us to tune the MPC towards improved closed-loop performance.
Then, the objective is to find an appropriate parameterization that optimizes closed-loop performance while maintaining stability, even in the presence of the aforementioned mismatch.
To learn such a parametrization in a structured way, we exploit Bayesian optimization on closed-loop data.
Stability during the learning procedure is ensured by employing algorithms from the class of safe Bayesian optimization.
Note that the proposed approach is not limited to such regulation problems, but generalizes straightforwardly to a wide variety of scenarios, including imitation learning to meet expert demonstrations, or inference of suitable low-complexity prediction models if the system dynamics are too complex.

\subsection{Parametrized Model Predictive Control}
We consider an MPC formulation with $n_{\text{p}} \in \mathbb{N}$ parameters $\theta \in \Theta \subset \mathbb{R}^{n_{\text{p}}}$.
At every discrete time index $k$, given a set of parameters $\theta$, the MPC solves a parameterized optimal control problem given by
\begin{mini!}
    {\mathbf{\hat{u}}_k}{\left\{ J(x_k, \mathbf{\hat{u}}_k) = \sum_{i=0}^{N-1} l_\theta({\hat x}_{i \mid k}, {\hat u}_{i \mid k}) + E_\theta({\hat x}_{N \mid k}) \! \right\}\label{eqn:mpc_ocp_cost}}{\label{eqn:mpc_ocp}}{}
    \addConstraint{\forall i}{\in \{0, 1, \dots, N-1\}: \notag}{}
    \addConstraint{}{\hat x_{i+1\mid k} = \hat f_\theta(\hat x_{i\mid k}, \hat u_{i\mid k}), \ \hat x_{0 \mid k} = x_k,}{\label{eqn:mpc_ocp_model}}
    \addConstraint{}{\hat x_{i \mid k} \in \mathcal{X}_\theta, \ {\hat u}_{i \mid k} \in \mathcal{U}_\theta, \ \hat x_{N \mid k} \in \mathcal{E}_\theta.}{\label{eqn:mpc_ocp_constraints}}
\end{mini!}
Herein, $\hat{\cdot}_{i\mid k}$ denotes the model-based $i$-step ahead prediction at time index $k$, $\hat f_\theta(\cdot)$ is the (parametrized) prediction model and $x_k$ is a measurement of the system state at time index $k$.
The length of the prediction horizon is $N \in \mathbb{N}, N < \infty$ and $l_\theta(\cdot)$ and $E_\theta(\cdot)$ are the (parametrized) stage and terminal cost functions, respectively.
The constraints \eqref{eqn:mpc_ocp_constraints} are comprised of the (parametrized) state, input, and terminal sets $\mathcal{X}_\theta$, $\mathcal{U}_\theta$, and $\mathcal{E}_\theta$, respectively.
The solution of \eqref{eqn:mpc_ocp} results in the optimal input sequence $\mathbf{\hat{u}}_k^*(x_k; \theta)=[\hat u_{0 \mid k}^*(x_k; \theta),\dots,\hat u_{N-1 \mid k}^*(x_k; \theta)]$, of which the first element is applied to system \eqref{eqn:discrete_system_general}. 
Thus, repeatedly solving \eqref{eqn:mpc_ocp} at every time index $k$ defines the (parametrized) control policy $\hat u_{0 \mid k}^*(x_k; \theta)$.

\color{black}
\subsection{Stability of Dynamical Systems}
Stability is a crucial characteristic of safety in dynamical systems.
For a  proper stability definition, we use the following function classes.
\begin{defn}
(Comparison functions)
The function classes $\mathcal{K}$, $\mathcal{L}$ and $\mathcal{KL}$ are defined as
\begin{align}
    \mathcal{K} \coloneq \{& \kappa: \mathbb{R}_0^+ \rightarrow \mathbb{R}_0^+ \mid \kappa \in \mathcal{C}^0, \forall b > a: \kappa(b) > \kappa(a), \notag \\
    &\kappa(0) = 0 \} \notag \\
    \mathcal{L} \coloneq \{& \lambda: \mathbb{R}_0^+ \rightarrow \mathbb{R}_0^+ \mid \lambda \in \mathcal{C}^0, \forall b > a: \lambda(b) < \lambda(a), \notag \\
    &\lim_{t \rightarrow \infty} \lambda(t) = 0 \} \notag \\
    \mathcal{KL} \coloneq \{& \zeta: \mathbb{R}_0^+ \times \mathbb{R}_0^+ \rightarrow \mathbb{R}_0^+ \mid \zeta \in \mathcal{C}^0, \zeta(\cdot, t) \in \mathcal{K}, \notag \\
    & \zeta(r, \cdot) \in \mathcal{L} \}, \notag
\end{align}
where $\mathcal{C}^0$ is the class of continuous functions.
\end{defn}
Exploiting the above function classes, we define \textit{P-practically asymptotic stability}.
\begin{defn}
\label{defn:stability}
(P-practically asymptotic stability \cite{grune2017nonlinear}) Let $Y$ be a forward invariant set for system \eqref{eqn:discrete_system_general} and let $P \subset Y$ be a subset of $Y$, where $P$ is contained in a ball around the equilibrium, i.e., $P \subseteq \mathcal{B}_\nu(x_d), \nu > 0$ and $x_d \in P$. Then we say that a point $x_d \in Y$ is P-practically asymptotically stable on $Y$ if there exists $\zeta \in \mathcal{KL}$ such that
\begin{equation}
    \lVert x_k-x_d \rVert \leq \max \{ \zeta(\lVert x_0-x_d \rVert, k), \nu \}
\end{equation}
holds for all $x_0 \in Y$ and all $k \in \mathbb{N}_0$.
\end{defn}
Loosely speaking, $P$-practically asymptotic stability means that all system trajectories, defined by $x_k, k \in \mathbb{N}$, that start in $Y$ converge asymptotically to the ball $\mathcal{B}_\nu(x_d) := \{ x \in Y \mid \| x - x_d  \| \leq \nu \}$, containing the desired state $x_d$.
By $\| \cdot \|$, we denote some standard vector norm on $\mathbb{R}^{n_x}$.

\subsection{Gaussian Process Surrogate Models}\label{sec:gp}
For learning the parameters of the MPC \eqref{eqn:mpc_ocp} via Bayesian optimization from closed-loop data, a surrogate model of the mapping from the parameters $\theta$ to the closed-loop performance is required.
To infer such a model, we employ Gaussian process (GP) regression to closed-loop data of system \eqref{eqn:discrete_system_general} under MPC \eqref{eqn:mpc_ocp}.

Gaussian process regression allows learning of a probabilistic model of an unknown function $\varphi: \mathbb{R}^{n_\xi} \to \mathbb{R}, \xi \mapsto \varphi(\xi)$ from data.
Loosely speaking, a GP $g(\xi) \sim \mathcal{GP}(m(\xi), k(\xi, \xi^\prime))$ is a Gaussian probability distribution over an infinite-dimensional function space.
A GP is fully defined by its prior mean function $m: \mathbb{R}^{n_\xi} \to \mathbb{R}, \xi \mapsto \mathrm{E}[g(\xi)]$ and the prior covariance function $k: \mathbb{R}^{n_\xi} \times \mathbb{R}^{n_\xi} \to \mathbb{R}, (\xi, \xi') \mapsto \mathrm{Cov}[g(\xi), g(\xi')]$.

Our objective is to learn a meaningful prediction model of $\varphi(\cdot)$ that yields accurate estimates of unobserved function values $\varphi(\xi_*)$ at test inputs $\xi_*$.
To this end, we use a set of (noisy) function observations $\mathcal{D} = \{ (\xi_i, \gamma_i = \varphi(\xi_i) + \varepsilon_i) \mid i \in \{1, \dots, n_{\text{d}} \}, \varepsilon_i \sim \mathcal{N}(0, \sigma^2) \}$, where $\varepsilon_i$ is white Gaussian noise with variance $\sigma^2$.
For notational convenience, we collect all training inputs $\xi_i$ in the input matrix $\Xi \in \mathbb{R}^{n_{\text{d}} \times n_\xi}$ and the corresponding training targets in the vector $\gamma \in \mathbb{R}^{n_{\text{d}} \times 1}$ and write equivalently $\mathcal{D} = (\Xi, \gamma)$.
Via Bayesian inference, the information provided by the training data is then incorporated into the model.

This inference step yields the so-called posterior distribution $g(\xi_*) \mid \Xi, \gamma, \xi_* \sim \mathcal{N}(m^+(\xi_*), k(\xi_*, \xi_*))$ at $\xi_*$ with mean and variance given by
\color{black}
\begin{subequations}
\label{eqn:posterior_gp}
\begin{align}
    m^+(\xi_*) &= m(\xi_*)+k(\xi_*,  \Xi) k_{\gamma}^{-1} (\gamma-m(\Xi)), \label{eq:gp_postMean} \\
    k^+(\xi_*) &= k(\xi_*, \xi_*) - k(\xi_*, \Xi) k_{\gamma}^{-1} k(\Xi, \xi_*). \label{eq:gp_postVar}
\end{align}
\end{subequations}
Here, $k_{\gamma} = k(\Xi, \Xi) + \sigma^2 I$ and $I$ denotes the identity matrix.
The posterior mean \eqref{eq:gp_postMean} is an estimate for the unknown function value $\varphi(\xi_*)$, and the posterior variance \eqref{eq:gp_postVar} quantifies the uncertainty of the predicted value \cite{rasmussen2006gaussian}.

The prior mean and covariance function are design choices and often involve free hyperparameters.
The latter need to be carefully adjusted to the underlying problem in order to obtain a meaningful model.
One approach to obtaining suitable hyperparameters is through evidence maximization based on the available training data $\mathcal{D}$, see \cite{rasmussen2006gaussian}.

To define how well the learned GP model represents the true function, we review the concept of well-calibrated GPs in the following definition.
\begin{defn}
\label{defn:gp_calibration}
    (Well-calibrated Gaussian process \cite{krishnamoorthy2022safe}) A Gaussian process model with posterior mean $m^+(\cdot)$ and posterior variance $k^+(\cdot)$ that is used to approximate the true function $\varphi(\cdot)$ is well-calibrated if the inequality
    \begin{equation}
        \label{eqn:gp_calibration}
        |m^+(\xi)-\varphi(\xi)| \leq \beta(\delta) \sqrt{k^+(\xi)}
    \end{equation}
    holds for all $\xi \in \chi$ in a region of interest $\chi \subset \mathbb{R}^{n_\xi}$ with a probability of at least $1-\delta$ for any $\delta \in (0, 1)$.
\end{defn}

It is possible to employ a wide variety of results from literature to find confidence parameters $\beta(\delta)$ for given $\delta \in (0,1)$ such that a GP is well-calibrated.
In the following, we provide one such result.
\begin{lem}\label{lem:gp_calibration}(\cite{chowdhury2017kernelized})
    Let the covariance function $k$ be continuous and positive definite and assume that $\varphi \in \mathcal{H}_k$ is contained in the reproducing kernel Hilbert space (RKHS) $\mathcal{H}_k$ associated with $k$. Assume further that the RKHS norm of $\varphi$ is bounded from above by $B \in \mathbb{R}_+, B < \infty$, i.e., $\| \varphi \|_k \leq B$, and that the training data are subject to $R$-Gaussian measurement noise.
    Then, the GP is well-calibrated in the sense of Definition \ref{defn:gp_calibration} for $\beta(\delta) = B + R \sqrt{\gamma + 1 + \ln(\delta^{-1})}$, where $\gamma$ is the maximum information gain.
\end{lem}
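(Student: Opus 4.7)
The plan is to split the pointwise error $m^+(\xi) - \varphi(\xi)$ into a deterministic RKHS term and a stochastic term driven by the measurement noise, bound each separately, and combine them via the triangle inequality. Substituting $\gamma = \varphi(\Xi) + \varepsilon$ into the expression \eqref{eq:gp_postMean} for the posterior mean (and taking the prior mean to be zero without loss of generality), one obtains
\begin{equation*}
m^+(\xi) - \varphi(\xi) = \bigl[k(\xi,\Xi) k_\gamma^{-1} \varphi(\Xi) - \varphi(\xi)\bigr] + k(\xi,\Xi) k_\gamma^{-1} \varepsilon,
\end{equation*}
so it suffices to control the two bracketed quantities individually.

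First, I would handle the deterministic RKHS bias term. Using the reproducing property $\varphi(\xi) = \langle \varphi, k(\cdot,\xi)\rangle_{\mathcal{H}_k}$ together with the feature-space identity $k_\gamma^{-1} = (k(\Xi,\Xi) + \sigma^2 I)^{-1}$, a standard Cauchy--Schwarz argument in $\mathcal{H}_k$ shows that the bias satisfies
\begin{equation*}
\bigl|k(\xi,\Xi) k_\gamma^{-1} \varphi(\Xi) - \varphi(\xi)\bigr| \leq \|\varphi\|_k \sqrt{k^+(\xi)} \leq B \sqrt{k^+(\xi)},
\end{equation*}
where the posterior variance $k^+(\xi)$ from \eqref{eq:gp_postVar} appears as the squared norm of the residual feature vector. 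This part is essentially algebra once the feature-map representation of the kernel ridge estimator is invoked.

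Second, and this is the main obstacle, one has to control the noise term $k(\xi,\Xi) k_\gamma^{-1} \varepsilon$ uniformly over $\xi \in \chi$. A naive union bound fails on continuous $\chi$, so I would instead appeal to the self-normalized martingale concentration inequality for vector-valued processes, adapted to the RKHS setting via the method of mixtures as in \cite{chowdhury2017kernelized}. This yields, with probability at least $1-\delta$ simultaneously for all $\xi$,
\begin{equation*}
\bigl|k(\xi,\Xi) k_\gamma^{-1} \varepsilon\bigr| \leq R \sqrt{k^+(\xi)} \sqrt{2 \ln\!\left(\sqrt{\det(I + R^{-2} k(\Xi,\Xi))}/\delta\right)}.
\end{equation*}
Carrying the mixture construction through in an infinite-dimensional feature space is the technically delicate step.

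Finally, I would relate the log-determinant to the maximum information gain: by definition, $\tfrac{1}{2} \ln\det(I + R^{-2} k(\Xi,\Xi))$ is upper bounded by $\gamma$, the supremum over training sets of the mutual information between observations and the latent function. Substituting this bound, adding the deterministic and stochastic contributions by the triangle inequality, and simplifying the logarithmic terms gives
\begin{equation*}
|m^+(\xi) - \varphi(\xi)| \leq \bigl(B + R\sqrt{\gamma + 1 + \ln(\delta^{-1})}\bigr)\sqrt{k^+(\xi)}
\end{equation*}
with probability at least $1-\delta$, which matches the claimed calibration constant $\beta(\delta)$ and completes the argument.
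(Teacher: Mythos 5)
The paper itself offers no proof of this lemma: it is imported verbatim from the cited reference \cite{chowdhury2017kernelized}, so there is no in-paper argument to compare against. Judged against the known proof in that reference, your sketch captures the correct overall architecture: the split of $m^+(\xi)-\varphi(\xi)$ into a deterministic RKHS bias term and a noise term, the Cauchy--Schwarz bound of the bias by $\|\varphi\|_k\sqrt{k^+(\xi)}$, the self-normalized martingale (method-of-mixtures) concentration for the noise term, and the identification of the log-determinant with the maximum information gain. These are exactly the ingredients of the original argument.

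Two caveats. First, the step you yourself flag as delicate --- the uniform-in-$\xi$ self-normalized concentration carried out in an infinite-dimensional feature space --- is the entire technical content of the lemma, and your proposal invokes it as a black box rather than proving it; as a standalone proof this is a genuine gap, though it is acceptable if the intent is to cite the concentration inequality from \cite{chowdhury2017kernelized}. Second, the final constant does not fall out of ``simplifying the logarithmic terms'' as cleanly as you suggest: in the original proof the $+1$ inside the square root arises because the regression is run with an inflated regularization parameter $1+\eta$ (rather than the noise variance $\sigma^2$), so the variance proxy appearing in the concentration bound is not literally the posterior variance \eqref{eq:gp_postVar}, and the resulting coefficient is $R\sqrt{2(\gamma+1+\ln(\delta^{-1}))}$ rather than the factor-of-$2$-free expression stated in the lemma. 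Your derivation as written would produce roughly $R\sqrt{2\gamma+2\ln(\delta^{-1})}$ and leaves the provenance of the $+1$ unexplained; tracking that bookkeeping explicitly is needed to land on the claimed $\beta(\delta)$ (or to notice that the paper's statement has silently dropped a constant relative to the source).
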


Further similar results exist, such as \cite{Capone2022}, \cite{Fiedler2021} and references therein.
While many results rely on continuous, positive definite kernels, this is not a major restriction.
The reason is that the class of continuous, positive definite kernels includes many so-called universal kernels, which are capable of approximating any continuous function on a compact set with arbitrary accuracy \cite{steinwart2008support}.
Specifically, even if $\varphi \in \mathcal{C}^0$ is not contained in the RKHS of such a kernel, there exists $\hat{\varphi} \in \mathcal{H}_k$ such that $\| \varphi - \hat{\varphi} \|_\infty \leq \epsilon$ for given $\epsilon > 0$, which renders results such as Lemma \ref{lem:gp_calibration} applicable in such cases.

\subsection{Safe Bayesian Optimization}
Bayesian optimization (BO) serves as an iterative optimization strategy for problems involving expensive-to-evaluate black-box functions \cite{garnett2023bayesian}.
In this work, we leverage BO to maximize closed-loop performance of system \eqref{eqn:discrete_system_general} under MPC \eqref{eqn:mpc_ocp} with respect to the controller parameters $\theta$ while adhering to specified safety constraints.
Note that the functional relationships between the parameters $\theta$ and the closed-loop measures for performance as well as safety cannot be expressed in closed form and are costly to evaluate, motivating the use of BO.

More specifically, we employ BO to iteratively solve the optimization problem
\begin{align}
\begin{split}
    \label{eqn:bo_optimization_problem}
    \theta^* &= \arg \max_{\theta \in \Theta} \left\{G_0(\theta)\right\} \\
    & \text{s.t.} \ \forall i \in \{1, \ldots, n_{\text{bc}} \}: G_i(\theta) \geq 0,
\end{split}
\end{align}
where $G_0(\cdot)$ quantifies the closed-loop performance and $G_i(\cdot), i \in \{1, \ldots, n_{\text{bc}} \}$ are $n_{\text{bc}} \in \mathbb{N}$ black-box safety constraints on the closed-loop realizations.
To solve \eqref{eqn:bo_optimization_problem}, BO exploits probabilistic surrogate models of the unknown black-box functions $G_0(\cdot)$ and $G_i(\cdot), i \in \{1, \ldots, n_{\text{bc}} \}$.
Commonly, Gaussian process models are used because of their efficiency given a limited amount of training data.
During learning, the GP surrogate models are sequentially updated using data obtained from probing different MPC parameterizations, which induces the iterative nature of the BO procedure.
Particularly, in each iteration $n \in \mathbb{N}$, we
\begin{enumerate}
    \item[1)] select the next parameter set of interest $\theta_n$ and conduct a closed-loop run using MPC \eqref{eqn:mpc_ocp} to generate a new data point $\{ \theta_n, G_0(\theta_n), \ldots, G_{n_{\text{bc}}}(\theta_n) \}$,
    \item[2)] update the training data set with the newly observed data point: $\mathcal{D}_{n+1} \leftarrow \mathcal{D}_n \cup \{ \theta_n, G_0(\theta_n), \ldots, G_{n_{\text{bc}}}(\theta_n) \}$,
    \item[3)] update the (posterior) GP models based on $\mathcal{D}_{n+1}$, including optimization of the hyperparameters.
\end{enumerate}
Here, $\mathcal{D}_n$ denotes the training data set with data points observed up to iteration $n$.

To conduct the sequential learning procedure in a structured way, an acquisition function is used to guide the selection of new parameter sets $\theta_{n}$ towards the optimal parameters $\theta^*$ for increasing $n$.
An acquisition function $\alpha_0: \mathbb{R}^{n_{\text{p}}} \to \mathbb{R}, \theta \mapsto \alpha_0(\theta; \mathcal{D}_n)$ exploits the surrogate model of the unknown function $G_0(\cdot)$ to evaluate the utility of a set of parameters $\theta_n$ with respect to closed-loop performance based on the collected data up to iteration $n$.
Exploiting the uncertainty information of the surrogate model, the learning procedure is able to trade off exploration and exploitation in the parameter space.
The next set of parameters is then determined by
\begin{equation}
    \label{eqn:bo_update}
    \theta_{n+1} = \arg \max_{\theta \in \Theta} \left\{ \alpha_0(\theta; \mathcal{D}_n) - \tau \ \sum_{i=1}^{n_\text{bc}} \alpha_i(\theta; \mathcal{D}_n) \right\}.
\end{equation}
While $\alpha_0(\theta; \mathcal{D}_n)$ focuses on the closed-loop performance, the penalty parameter $\tau \in \mathbb{R}_+$ and the penalty terms $\alpha_i: \mathbb{R}^{n_{\text{p}}} \to \mathbb{R}, \theta \mapsto \alpha_i(\theta; \mathcal{D}_n)$ are used to include the black-box constraints $G_i(\cdot), i \in \{1, \dots, n_\text{bc} \}$.
The penalty terms are given as logarithmic barriers of the form
\begin{equation}
    \label{eqn:log_barrier_term}
    \alpha_i(\theta;\mathcal{D}_n) = \log \left(m_{G_i}^+(\theta) - \beta(\delta) \sqrt{k_{G_i}^+(\theta)} \right).
\end{equation}
Herein, $m_{G_i}^+(\cdot)$ and $k_{G_i}^+(\cdot)$ are the posterior mean and variance of the GP model associated with the constraint $G_i(\cdot)$.
More precisely, $\alpha_i(\cdot), i \in \{1, \dots, n_\text{bc} \}$ in \eqref{eqn:log_barrier_term} are used to incorporate the lower-confidence bound of the learned black-box constraints into the BO procedure \cite{krishnamoorthy2023tuning}.

To allow for safe learning, a safe initial data set $\mathcal{D}_0$ must be known.
This is formulated in the following standard assumption.
\begin{assm}
    \label{assm:initial_safe_set}
    There exists a safe initial data set $\mathcal{D}_0$ such that for all data points $(\theta, G_0(\theta), ..., G_{n_\mathrm{bc}}(\theta)) \in \mathcal{D}_0$, it holds that $G_i(\theta) \geq 0, \forall i \in \{ 1, \dots, n_\text{bc} \}$.
\end{assm}

In addition to that, the constraint functions must be exactly represented by their GP surrogate models according to the following assumption.
\begin{assm}
    \label{assm:well_calibrated_gp}
    The surrogate Gaussian process models of $G_i, i \in \{ 1, \dots, n_\text{bc} \},$ are well-calibrated in the sense of Definition \ref{defn:gp_calibration}.
\end{assm}

We now formulate the following result on parameter selection with guaranteed probabilistic constraint satisfaction.
\begin{lem}
    \label{lem:safe_parameter_learning}
    \textit{(Safe parameter learning \cite{krishnamoorthy2023tuning})} 
    Under Assumptions \ref{assm:initial_safe_set} and \ref{assm:well_calibrated_gp}, for some $\delta \in (0,1)$, and for any unconstrained acquisition function $\alpha_0(\theta; \mathcal{D}_n)$, $G_i(\theta_n) \geq 0$ holds for all $i \in \{ 1, \dots, n_\text{bc} \}$ and for $\theta_n$ chosen according to \eqref{eqn:bo_update} with probability at least $1-\delta$.
\end{lem}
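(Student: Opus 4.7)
The plan is to combine the domain restriction imposed by the logarithmic barriers in \eqref{eqn:log_barrier_term} with the probabilistic two-sided bound provided by well-calibration. Intuitively, the barrier confines the maximizer of \eqref{eqn:bo_update} to the region where the lower confidence bound of every constraint GP is strictly positive, and Assumption \ref{assm:well_calibrated_gp} then guarantees that this lower confidence bound is a valid lower bound on the true $G_i$ with probability at least $1-\delta$.

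First, I would show that the maximizer $\theta_n$ from \eqref{eqn:bo_update} necessarily lies in the safe lower-confidence set
\[
\mathcal{S}_n = \left\{ \theta \in \Theta : m_{G_i}^+(\theta) - \beta(\delta)\sqrt{k_{G_i}^+(\theta)} > 0, \ \forall i \in \{1, \dots, n_\text{bc}\} \right\}.
\]
The barrier $\alpha_i(\theta;\mathcal{D}_n) = \log(m_{G_i}^+(\theta) - \beta(\delta)\sqrt{k_{G_i}^+(\theta)})$ is only real-valued inside $\mathcal{S}_n$ and diverges on its boundary, so any well-defined maximizer of the objective in \eqref{eqn:bo_update} must satisfy $\theta_n \in \mathcal{S}_n$. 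Assumption \ref{assm:initial_safe_set} guarantees that $\mathcal{S}_n \neq \emptyset$, because $\mathcal{D}_0$ contains a parameter $\theta^\circ$ with $G_i(\theta^\circ) \geq 0$ for all $i$; together with the concentration of the posterior near observed data, this ensures that a neighborhood of $\theta^\circ$ belongs to $\mathcal{S}_n$ and the optimization in \eqref{eqn:bo_update} is well-posed.

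Second, I would invoke Assumption \ref{assm:well_calibrated_gp}. Well-calibration of the $n_\text{bc}$ constraint GPs yields, with probability at least $1-\delta$,
\[
|m_{G_i}^+(\theta) - G_i(\theta)| \leq \beta(\delta)\sqrt{k_{G_i}^+(\theta)}
\]
for every $\theta$ in the region of interest and every $i$. In particular, the lower tail gives $G_i(\theta) \geq m_{G_i}^+(\theta) - \beta(\delta)\sqrt{k_{G_i}^+(\theta)}$. Evaluating at $\theta = \theta_n \in \mathcal{S}_n$ and chaining the two inequalities yields $G_i(\theta_n) > 0$ for all $i \in \{1,\dots,n_\text{bc}\}$ with probability at least $1-\delta$, which is stronger than the claimed non-strict inequality $G_i(\theta_n) \geq 0$.

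The main obstacle is making the first step fully rigorous: one has to argue that the argmax in \eqref{eqn:bo_update} is actually attained inside $\mathcal{S}_n$ rather than only approached along $\partial \mathcal{S}_n$, where the barrier is singular. Under mild continuity of $\alpha_0$ and of the posterior moments $m_{G_i}^+$, $k_{G_i}^+$, and using the existence of an interior safe point supplied by Assumption \ref{assm:initial_safe_set}, this follows from the standard interior-point interpretation of the logarithmic barrier for any $\tau > 0$. A secondary concern is whether the confidence level $1-\delta$ must be split via a union bound over the $n_\text{bc}$ constraints; since Assumption \ref{assm:well_calibrated_gp} already postulates joint well-calibration of all constraint GPs at level $\delta$, this is absorbed into the assumption and no further adjustment of the confidence parameter is needed.
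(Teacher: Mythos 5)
The paper does not actually prove this lemma; it is imported verbatim from \cite{krishnamoorthy2023tuning}, and your reconstruction is precisely the argument underlying that result: the logarithm in \eqref{eqn:log_barrier_term} is real-valued only where the lower confidence bound is positive, so any maximizer of \eqref{eqn:bo_update} must lie in your set $\mathcal{S}_n$, and well-calibration (holding uniformly over the region of interest, hence also at the data-dependent point $\theta_n$) transfers the positivity of the lower confidence bound to the true $G_i$ with probability at least $1-\delta$. The two caveats you flag --- attainment of the argmax in the interior, and whether a union bound over the $n_\text{bc}$ constraints is needed or is absorbed into Assumption \ref{assm:well_calibrated_gp} --- are exactly the points glossed over in the source. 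One further remark: as printed, \eqref{eqn:bo_update} subtracts the barrier, i.e., the objective contains $-\tau\,\alpha_i$ with $\alpha_i = \log(\cdot)$, so the objective tends to $+\infty$ as $\theta$ approaches the boundary of $\mathcal{S}_n$ from inside; the ``standard interior-point'' attainment argument you invoke would require the opposite sign. Your proof survives this because it relies only on the domain restriction --- the maximizer, wherever attained, must lie where every logarithm is defined --- and not on the repelling effect of the barrier, but you should not lean on the interior-point interpretation for the formula as written.
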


As discussed in Section \ref{sec:gp}, satisfaction of Assumption \ref{assm:well_calibrated_gp} can be achieved using standard results if the functions $G_i(\theta), i \in \{ 1, \dots, n_\text{bc} \}$ are uniformly continuous, which is not restrictive.
However, it is worth noting that rigorous error bounds obtained according to Lemma \ref{lem:gp_calibration} and similar results are often overly conservative.

\section{Safe Learning of a Neural Cost Function}
\label{sec:main}
In this section, we outline our approach for achieving stable and safe closed-loop learning.
We introduce a neural stage cost function and detail a Bayesian optimization procedure to learn its parameters effectively.
Subsequently, we derive (probabilistic) stability and safety guarantees for the learning procedure.

\subsection{Neural Cost Function}
In practice, it is challenging to obtain accurate prediction models $\hat f(x, u)$ for the true system dynamics $f(x, u)$.
We circumvent the usual approach of learning an open-loop dynamics model from data.
Our approach adopts a more direct method by directly learning the MPC stage cost from closed-loop experiments.
This approach is motivated by the understanding that the model's dynamics are ultimately reflected in the MPC cost function, as expressed by the following relationship
\begin{align*}
    J(x_k, \mathbf{\hat{u}}_k) & = \sum_{i=0}^{N-1} l({\hat x}_{i \mid k}, {\hat u}_{i \mid k}) + E({\hat x}_{N \mid k}) \\
    & =  l(x_k, \hat{u}_{0 \mid k}) + \sum_{i=1}^{N-1} l \big( \hat{f}(\hat{x}_{i-1 \mid k}, \hat{u}_{i-1 \mid k}), \hat{u}_{i \mid k} \big) \\
    & \qquad \qquad \qquad \qquad \qquad + E \big( \hat{f}(\hat{x}_{N-1 \mid k}, \hat{u}_{N-1 \mid k}) \big).
\end{align*}
Although parametrizing only the cost function cannot entirely compensate for model errors, it is advantageous as the structure of the cost function is often simpler than that of the system dynamics.
This makes cost function learning computationally more attractive than learning open-loop prediction models, especially since the latter do not guarantee optimal closed-loop performance \cite{gevers1993towards}.
We use a quadratic nominal stage cost function for regulation to the desired set-point $(x_d, u_d)$,
\begin{equation}
    \label{eqn:quadratic_state_cost}
    l(x,u) = (x-x_d)^\top Q (x-x_d) + (u-u_d)^\top R (u-u_d),
\end{equation}
where $Q \in \mathbb{R}^{n_{\text{x}} \times n_{\text{x}}}, Q \succeq 0$ and $R \in \mathbb{R}^{n_{\text{u}} \times n_{\text{u}}}, R \succ 0$ are weighting matrices.
We formulate the parameterized stage cost of MPC \eqref{eqn:mpc_ocp} as
\begin{equation}
    \label{eqn:specific_stage_cost}
    l_\theta(x, u) = l(x,u) + l_{NN}(x; \theta).
\end{equation}
To ensure that $l_\theta(x_d, u_d) = 0$ holds, we further use
\begin{equation}\label{eq:NN_cost_term}
    l_{NN}(x; \theta) = y_{NN}(x)-y_{NN}(x_d),
\end{equation}
where $y_{NN}:\mathbb{R}^{n_{\text{x}}} \to \mathbb{R}$ is a feedforward neural network with $L$ layers, given by
\begin{equation*}
    y_{NN}(z) = W_L \sigma(W_{L-1} \sigma(...\sigma(W_1 z + b_1)... ) + b_{L-1}) + b_L.
\end{equation*}
Therein, $\sigma: \mathbb{R} \to \mathbb{R}$ is the activation function and $W_i$ and $b_i$, $i = 1, \ldots, L$, are the weight matrix and bias vector representing the transition from layer $i-1$ to layer $i$, where $i=0$ corresponds to the input layer.
We employ a quadratic terminal cost function $E(\hat{x}_{N \mid k}) = (\hat{x}_{N \mid k} - x_d)^\top P (\hat{x}_{N \mid k} - x_d)$, where $P \in \mathbb{R}^{n_x \times n_x}$ is obtained by solving the discrete-time Riccati equation underlying the associated linear-quadratic regulator problem for the linearized system dynamics around $x_d$.
We do not modify the terminal cost and use $E_\theta(\hat{x}_{N \mid k}) = E( \hat{x}_{N \mid k})$ in the following.

The set of parameters that are to be tuned by the BO algorithm is given by $\theta = \left\{ W_1, b_1, \dots, W_L, b_L \right\}$.
Thus, we consider a high-dimensional parameter space due to the large number of parameters within the neural network, necessitating efficient exploration during the learning process.
Consequently, a trade-off between the network's expressiveness, determined by its size, and computational complexity arises.
Note that the approach can be extended by additionally parameterizing the prediction model \eqref{eqn:mpc_ocp_model} or constraints \eqref{eqn:mpc_ocp_constraints}.
Since the changing parameterization of MPC \eqref{eqn:mpc_ocp} makes it challenging to derive a-priori guarantees on stability and state constraint satisfaction, these properties need to be enforced in the BO-based tuning procedure.

\subsection{Probabilistic Stability Guarantees During Learning}
It is challenging to restrict the BO-based parameter choices such that all closed-loop trials remain stable.
However, closed-loop stability is an important requirement for any control system.
To achieve safe parameter learning during the BO procedure, we incorporate stability requirements on the system, exploiting Definition \ref{defn:stability}.
We choose $\nu \in \mathbb{R}_+$ and $\zeta \in \mathcal{KL}$ such that
\begin{equation}
    \zeta(\lVert x_0 - x_k \rVert, k) = \rho \chi^k \lVert x_0-x_d \rVert_2
\end{equation}
for $\rho \in \mathbb{R}_+$ and $\chi \in (0,1)$.
Based thereon, we impose the stability condition
\begin{align}
    \label{eqn:stability_bo_constraint}
    G_1(\theta) = \min_{k \leq M} \left\{ \max \{ \rho \chi^k \lVert x_0-x_d \rVert_2, \nu \} - \lVert x_k-x_d \rVert_2 \right\},
\end{align}
where $M \in \mathbb{N}, M < \infty$ is the process run time.
This leads to the following result that probabilistically guarantees $P$-practically stable closed-loop trajectories during tuning.

\begin{thm}
    \label{thm:stable_bo}
    Select $\delta \in (0,1)$. Under Assumptions \ref{assm:initial_safe_set} and \ref{assm:well_calibrated_gp}, system \eqref{eqn:discrete_system_general} under MPC \eqref{eqn:mpc_ocp} with parameterization $\theta_n$ chosen according to \eqref{eqn:bo_update} satisfies $G_1(\theta_n) \geq 0$, and is thus $P$-practically stable, with probability at least $1-\delta$.
\end{thm}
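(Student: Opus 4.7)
The plan is to split the argument into two essentially independent parts: (i) a direct application of Lemma~\ref{lem:safe_parameter_learning} to obtain the probabilistic constraint satisfaction $G_1(\theta_n) \geq 0$, and (ii) a deterministic step in which this inequality is unpacked and identified with the defining inequality of $P$-practical asymptotic stability from Definition~\ref{defn:stability}.

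For step (i), I would observe that the hypotheses of Lemma~\ref{lem:safe_parameter_learning} coincide exactly with Assumptions~\ref{assm:initial_safe_set} and \ref{assm:well_calibrated_gp}, which are part of the theorem's premises. The stability constraint $G_1(\theta)$ defined in \eqref{eqn:stability_bo_constraint} plays the role of one of the black-box constraints $G_i$ in \eqref{eqn:bo_optimization_problem}, and its surrogate is handled in the same way as any other constraint surrogate by the update rule \eqref{eqn:bo_update}. Consequently, Lemma~\ref{lem:safe_parameter_learning} applies verbatim and yields $G_1(\theta_n) \geq 0$ with probability at least $1-\delta$.

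For step (ii), I would note that $G_1(\theta_n) \geq 0$ is equivalent to
\begin{equation*}
    \lVert x_k - x_d \rVert_2 \leq \max\{\rho \chi^k \lVert x_0 - x_d \rVert_2, \nu\}
\end{equation*}
for all $k \leq M$. Defining $\zeta(r,t) := \rho \chi^t r$, I would verify that $\zeta \in \mathcal{KL}$: for each fixed $t \geq 0$, $\zeta(\cdot, t)$ is continuous, vanishes at zero, and is strictly increasing, so $\zeta(\cdot, t) \in \mathcal{K}$; for each fixed $r > 0$, $\zeta(r, \cdot)$ is continuous, strictly decreasing in $t$ (because $\chi \in (0,1)$), and satisfies $\lim_{t \to \infty} \zeta(r, t) = 0$, so $\zeta(r, \cdot) \in \mathcal{L}$. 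With this identification, the inequality matches the one in Definition~\ref{defn:stability} with $Y$ taken as a forward invariant set containing the closed-loop trajectory.

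The main obstacle I anticipate is the mismatch between the finite window $k \leq M$ appearing in $G_1$ and the unbounded time domain $k \in \mathbb{N}_0$ required by Definition~\ref{defn:stability}. Over the actual operation horizon of length $M$ the bound is immediate, but extending it to all future times would require either an explicit forward-invariance argument, for instance exploiting the terminal set $\mathcal{E}_\theta$ to ensure that the trajectory remains inside $\mathcal{B}_\nu(x_d)$ for $k > M$, or the assumption that $M$ is chosen large enough so that $\rho \chi^M \lVert x_0 - x_d \rVert_2 \leq \nu$ combined with invariance of the $\nu$-ball. I would therefore expect the proof either to restrict the stability claim to the operation horizon of the process or to supplement it with such a forward-invariance argument at this point.
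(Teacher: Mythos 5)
Your proposal is correct and follows essentially the same route as the paper, whose entire proof is the single sentence that the result ``directly follows from Lemma~\ref{lem:safe_parameter_learning} and exploiting Definition~\ref{defn:stability}''; your steps (i) and (ii) are exactly the expansion of that sentence. The finite-window caveat you raise ($k \leq M$ in \eqref{eqn:stability_bo_constraint} versus $k \in \mathbb{N}_0$ in Definition~\ref{defn:stability}) is a genuine subtlety that the paper silently glosses over, so your explicit acknowledgment of it is a strength rather than a deviation.
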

\begin{proof}
    The result directly follows from Lemma \ref{lem:safe_parameter_learning} and exploiting Definition \ref{defn:stability}.
\end{proof}

Note that $\zeta$ and $\nu$ are design choices that can be used to enforce stability in the sense of Definition \ref{defn:stability}.
Clearly, $\zeta$ and $\nu$ need to be chosen such that the closed-loop trajectories resulting from the safe data set satisfy \eqref{eqn:stability_bo_constraint}.
Moreover, a stringent design of \eqref{eqn:stability_bo_constraint} results in less exploration, which negatively affects the performance improvement, while less stringent designs allow for faster performance improvements at the cost of more frequently unstable runs.
This further correlates with the choice of $\delta$, which is an upper bound on the fraction of closed-loop runs that violate \eqref{eqn:stability_bo_constraint}.
In contrast to traditional stability analysis for MPC, we do not rely on Lyapunov functions but enforce the stability condition from Definition \ref{defn:stability} directly.
Note that recursive feasibility of the MPC problem can, e.g., be achieved using robust approaches \cite{zanon2021safe}. As feasibility is independent of the proposed cost function parametrization, we refrain from a detailed discussion here.
While input constraints in \eqref{eqn:mpc_ocp_constraints} are naturally satisfied by the MPC, irrespective of its parameterization, this is not true for state constraints.
Thus, if state constraints are present, they need to be treated analogously and additionally imposed in the BO layer.

\begin{algorithm}
    \caption{Safe MPC Cost Learning via BO}
    \label{alg:safe_bo}
    \begin{algorithmic}[1]
    \Require Parametrized MPC policy $u_{0 \mid k}^*(x; \theta)$, parameter domain $\Theta$, initially safe data set $\mathcal{D}_0$, prior GP models of $G_0$ and $G_1$
    \For{$n = 0, 1, \cdots$}
        \State Determine posterior GPs from $\mathcal{D}_n$
        \State Maximize constrained acquisition function for $\theta_{n+1}$
        \State Run closed-loop experiment using $u_{0 \mid k}^*(x; \theta_{n+1})$
        \State Determine $G_0(\theta_{n+1})$ and $G_1(\theta_{n+1})$
        \State Update data set
    \EndFor
    \end{algorithmic}
\end{algorithm}

\section{Simulation Study}
\label{sec:simulation}
We illustrate the effectiveness and safety capabilities of the proposed approach in simulation.
After an introduction of our set-up, we show simulation results for cost function learning while imposing stability-informed constraints.

\subsection{Set-Up}
We employ a double pendulum as an example system. Its time-continuous dynamics are given by
\begin{align}\label{eq:pendulum_continuous}
\begin{split}
    \ddot \psi_1 =& (m_2  l_1 \dot{\psi}_1^2 s_{21} c_{21} + m_2 g s_2 c_{21} + m_2  l_2 \dot{\psi}_2^2 s_{21} \\
    &- (m_1 \! + \! m_2) g s_1) / (m_1 \! + \! m_2)  l_1 - m_2  l_1 c_{21}^2 + u \\
    \ddot \psi_2 =& (-m_2  l_2 \dot{\psi}_2^2 s_{21} c_{21} \! + \! (m_1 \! + \! m_2) (g s_1 c_{21} \! - \! l_1 \dot{\psi}_1^2 s_{21} \\
    &- g s_2)) / ( l_2 /  l_1) (m_1 \! + \! m_2)  l_1 - m_2  l_1 c_{21}^2,
\end{split}
\end{align}
where $s_i = \sin(\psi_i)$, $c_i = \cos(\psi_i)$ for $i \in \{1,2\}$ and $s_{21} = \sin(\psi_2 - \psi_1)$, $c_{21} = \cos(\psi_2 - \psi_1)$.
The system states $x = (\psi_1, \psi_2, \dot \psi_1, \dot \psi_2)$ are the angles $\psi_1, \psi_2$ of the pendulums links and the angular velocities $\dot{\psi}_1, \dot{\psi}_2$.
The control input $u \in [-50, 50]$ is applied as an acceleration acting on the base of the first link.
The parameters $m_1 = \SI{1}{\kg}, m_2 = \SI{1}{\kg}$ and $l_1 = \SI{1}{\m}, l_2 = \SI{1}{\m}$ are the masses and lengths of the links, respectively.
We discretize the above dynamics using a one-step fourth-order Runge-Kutta method with a sampling time of $T_s = \SI{0.05}{\s}$.
The control objective is to bring the pendulum from its initial state $x_0$ into the upright position defined by $x_d = (\pi, \pi, 0, 0), u_d = 0$ and to stabilize it there.

For a closed-loop trajectory of length $M$ under the MPC control law, we measure the closed-loop performance by the total weighted deviation of inputs and states from the set-point $(x_d, u_d)$.
Specifically, it is given by
\color{black}
\begin{equation}
\label{eqn:closed_loop_performance}
\begin{split}
    G_0(\theta) \! = \! \sum_{k=0}^{M} \lVert x_k \! - \! x_d \rVert_V^2 \! + \! \lVert u_k \! - \! u_d \Vert_W^2 \! + \! \lVert x_M \! - \! x_d \rVert_Z^2,
\end{split}
\end{equation}
where the weighting matrices $V, Z \in \mathbb{R}^{n_\text{x} \times n_\text{x}}, V \succeq 0, Z \succeq 0$ penalize the closed-loop states and $W \in \mathbb{R}^{n_\text{u} \times n_\text{u}}, W \succeq 0$ penalizes the MPC closed-loop control input.
While the structure of the above closed-loop cost is similar to the MPC cost function in \eqref{eqn:mpc_ocp_cost}, \eqref{eqn:closed_loop_performance} is based on an entire closed-loop trajectory of length $M$.
Thus, \eqref{eqn:closed_loop_performance} is global and long-term oriented, while the MPC cost function considers only the local cost over its prediction horizon of length $N \ll M$ and is thus more short-term oriented.
In our simulation study, we use CasADi to solve the MPC problem and BoTorch for safe Bayesian optimization.

\color{black}
\subsection{Safe Stability-informed Cost Function Learning}
We consider learning of the neural network parameterized MPC stage cost function while accounting for safe learning with respect to the stability condition encoded in \eqref{eqn:stability_bo_constraint}.
We use the discrete-time version of \eqref{eq:pendulum_continuous} with parameter estimates $\hat{l}_1 = \hat{l}_2 = \SI{1.2}{\m}$, $\hat{m}_1 = \SI{2}{\kg}$ and $\hat{m}_2 = \SI{0.5}{\kg}$ as a prediction model in the MPC, resulting in a significant model-plant mismatch.
The stage cost is parameterized by a neural network with one hidden layer comprised of seven neurons, resulting in $n_p = 43$ tunable parameters that constitute the weight and bias matrices.
We employ the hyperbolic tangent activation function in the hidden layer and a linear activation in the output layer.

For both the performance measure $G_0(\theta)$ and the stability constraint $G_1(\theta)$, we utilize zero-mean Gaussian process priors with Matérn covariance functions.
We derive the hyperparameters via evidence maximization and the posterior GPs according to \eqref{eqn:posterior_gp}.
We use an initial data set $\mathcal{D}_0$ comprised of $100$ randomly sampled safe parameterizations, which fulfill the stability constraint $G_1$. We obtain the corresponding performance values according to \eqref{eqn:closed_loop_performance} and levels of constraint satisfaction according to \eqref{eqn:stability_bo_constraint}.
Subsequently, the BO procedure is run for $400$ iterations, utilizing the expected improvement acquisition function.

To illustrate the probabilistic safety capabilities of our algorithm, the constraints are imposed using confidence scaling parameters of $\beta = 0.5$ and $\beta = 2$.
For simplicity, we correspondingly assume $\delta = 0.617$ and $\delta = 0.046$ according to the confidence levels of Gaussian distributions.
Additionally, we limit the amount of exploration during the BO procedure by increasing the parameter bounds $\Theta$ incrementally from run to run.
This is a standard practice and employed due to the fact that BO tends to over-explore in high-dimensional parameter spaces \cite{wei2023highdimensional}.

All closed-loop trajectories that were probed by Algorithm \ref{alg:safe_bo} are shown in Figure \ref{fig:sim_states} by gray lines for the states $\psi_1$ and $\psi_2$ of the pendulum and for $\beta = 2$.
The initially safe trajectory without a neural network cost (orange line) and the optimal learned trajectory (blue line) are highlighted.
Both the initial and the learned trajectory steer the system into the desired set-point (black dashed line).
However, the learned trajectory reaches the set-point faster and shows fewer oscillations and overshooting compared to the initial trajectory.
Furthermore, the learned controller satisfies the imposed stability condition \eqref{eqn:stability_bo_constraint} during most of the closed-loop runs.
This is in line with the probabilistic stability guarantees from Theorem \ref{thm:stable_bo}: constraint violations occur (i) in 26 runs for $\beta = 0.5$ ($\delta_\mathrm{empiric} = 0.065$), and (ii) in 3 runs for $\beta = 2$ ($\delta_\mathrm{empiric} = 0.005$).
In Figure \ref{fig:sim_comparison_functions}, we show the state norm over time for all closed-loop runs and the desired upper bound from the stability condition.
For both cases, the number of observed constraint violations is significantly lower than guaranteed by Theorem \ref{thm:stable_bo} despite the simplified choice of the confidence parameter.
This is because the BO algorithm not only probes parameterizations on the boundary of the safe parameter region (exploration) but also from its interior (exploitation).
Additionally, the safety level is increase by the algorithmic restriction of exploration, as discussed above.

Our simulation study shows that the neural network cost function modification can (partially) compensate for modeling errors while maintaining (probabilistic) safety during the BO-based learning procedure.

\begin{figure}[t]
    \centering
    \begin{subfigure}{0.5\textwidth}
        \includegraphics{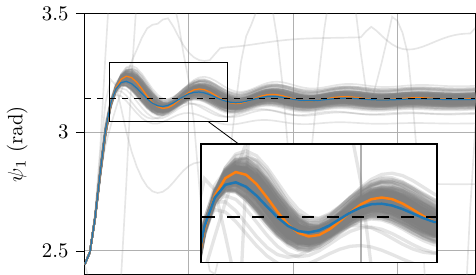}
    \end{subfigure}
    \begin{subfigure}{0.5\textwidth}
    \vspace{1mm}
        \includegraphics{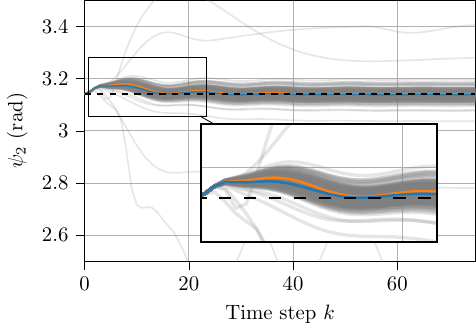}
    \end{subfigure}
    \vspace{-3mm}
    \caption{States $\psi_1$ (top) and $\psi_2$ (bottom) during the learning procedure for $\beta = 2$. We show the learned result (blue), the initial safe closed-loop run without a neural network stage cost (orange), and all intermediate closed-loop samples resulting from the BO procedure (gray).}
    \label{fig:sim_states}
\end{figure}

\begin{figure}[t]
    \centering
    \begin{subfigure}{0.5\textwidth}
        \includegraphics{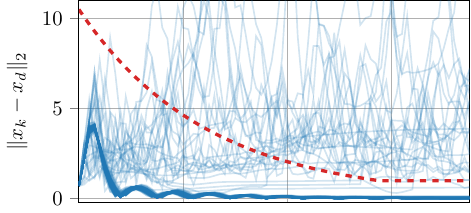}
    \end{subfigure}
    \begin{subfigure}{0.5\textwidth}
        \includegraphics{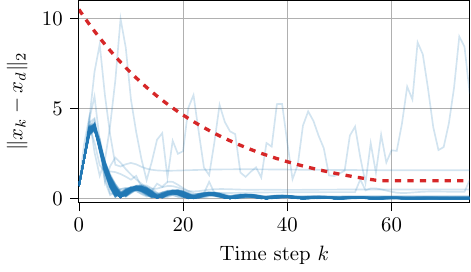}
    \end{subfigure}
    \vspace{-3mm}
    \caption{State norms (blue) for all closed-loop runs and constraint resulting from the stability condition $\max \{ \zeta(\lVert x_0-x_d \rVert, k), \nu \}$ (red). Results are shown for confidence parameters $\beta = 0.5$ (top) and $\beta = 2$ (bottom).}
    \label{fig:sim_comparison_functions}
\end{figure}

\section{Conclusions}
\label{sec:conclusion}
We presented a hierarchical learning and control scheme, consisting of a low-level parametrized model predictive controller and a higher-level Bayesian optimization layer for automated and safe parameter learning.
The cost function was parameterized by a feedforward neural network, enabling a flexible framework to guide the MPC controller towards achieving a desired closed-loop behavior.
This learned cost function can incorporate higher-level performance measures and (partially) compensate for model uncertainties commonly encountered in practice.
To ensure stability during learning, we introduced stability constraints on the Bayesian optimization procedure.
Using an initial set of safe parameters, we were able to extend the safe parameter region towards an improved closed-loop behavior while maintaining (probabilistic) stability.
We demonstrated the effectiveness of our approach in a simulation study, showcasing its ability to achieve both performance and safety objectives during learning, even in scenarios with significant model-plant mismatch.

Future research will be dedicated towards extending the proposed approach to richer parameterizations of the MPC, inducing even higher-dimensional parameter spaces.
This may involve hybrid approaches that combine reinforcement learning and Bayesian optimization to capitalize on the strengths of both techniques.

\bibliographystyle{ieeetr}
\bibliography{bibliography}

\end{document}